\def\BibTeX{{\rm B\kern-.05em{\sc i\kern-.025em b}\kern-.08em
    T\kern-.1667em\lower.7ex\hbox{E}\kern-.125emX}}
\newtheorem{theorem}{Theorem}
\newtheorem{definition}{Definition}
\begin{document}

\title{Differentially Private ANOVA Testing
}

\author{\IEEEauthorblockN{Zachary Campbell}
\IEEEauthorblockA{\textit{Mathematics Department} \\
\textit{Reed College}\\
Portland, OR, USA \\
campbza@reed.edu}
\and
\IEEEauthorblockN{Andrew Bray}
\IEEEauthorblockA{\textit{Mathematics Department} \\
\textit{Reed College}\\
Portland, OR, USA \\
abray@reed.edu}
\and
\IEEEauthorblockN{Anna Ritz}
\IEEEauthorblockA{\textit{Biology Department} \\
\textit{Reed College}\\
Portland, OR, USA \\
aritz@reed.edu}

\and
\IEEEauthorblockN{Adam Groce}
\IEEEauthorblockA{\textit{Mathematics Department} \\
\textit{Reed College}\\
Portland, OR, USA \\
agroce@reed.edu}
}

\maketitle

\begin{abstract}
Modern society generates an incredible amount of data about individuals, and releasing summary statistics about this data in a manner that provably protects individual privacy would offer a valuable resource for researchers in many fields.  We present the first algorithm for analysis of variance (ANOVA) that preserves differential privacy, allowing this important statistical test to be conducted (and the results released) on databases of sensitive information.  In addition to our private algorithm for the $F$ test statistic, we show a rigorous way to compute $p$-values that accounts for the added noise needed to preserve privacy.  Finally, we present experimental results quantifying the statistical power of this differentially private version of the test, finding that a sample of several thousand observations is sufficient to detect variation between groups. The differentially private ANOVA algorithm is a promising approach for releasing a common test statistic that is valuable in fields in the sciences and social sciences.\end{abstract}

\begin{IEEEkeywords}
differential privacy, data privacy, statistics, ANOVA
\end{IEEEkeywords}

\section{Introduction}
The goal of private data analysis is to gain useful information from databases of personal information while at the same time protecting the privacy of the individuals in the database.  This has traditionally been done using ad hoc ``anonymization'' techniques, but in recent years these techniques have failed on numerous occasions (ex., \cite{sweeney2002k,narayanan2008robust, homer2008resolving}).  Differential privacy \cite{dwork2006calibrating} has emerged as a rigorous framework for providing provably strong privacy while publicly disseminating useful information about a data set.

Differential privacy is not a method for producing publishable output, but rather a \textit{sufficient condition} that guarantees privacy is protected with the level of protection parameterized by a parameter $\epsilon$.  In order to satisfy differential privacy, the information being published (generally summary statistics) must in some way be randomized.  In the simplest query algorithms, random noise is added to the correct (but not necessarily private) output.  The result is often a good estimate of the true value.  Some schemes are more complicated, adding noise in various ways \textit{during} a computation.  Differentially private algorithms have been created to allow for the release of simple summary statistics like means \cite{dwork2006calibrating} and medians \cite{dwork2009differential, nissim2007smooth}, as well as more elaborate statistical output, including histograms \cite{dwork2006calibrating}, linear regressions \cite{dwork2009differential, zhang2012functional, chaudhuri2009privacy}, chi-squared tests \cite{vu2009differential} and a variety of machine learning techniques. 

In this paper, we present the first differentially private algorithm for conducting a one-way analysis of variance (ANOVA).  This is a statistical test for settings with a categorical independent variable and a continuous dependent variable.  It is a very widely used piece of statistical analysis and is used as the primary test of significance for many studies in biology, medicine, and the social sciences.

A one-way ANOVA outputs a test statistic called an $F$ statistic, which measures the ratio of the variation in means of categories to variation of individual observations.  Our algorithm outputs an estimate of this value.  However, $F$ statistics are not of interest directly, but rather as a tool for measuring statistical significance of the result.  One cannot simply look up our noisy estimate of $F$ in an $F$-distribution table to find a $p$-value, because the noise makes extreme values more likely than they would otherwise be.

Our second contribution is to give a rigorous method for computing $p$-values.  This means that our algorithm produces final results that can be easily interpreted.  (Most existing algorithms do not output these sorts of results.  For example, algorithms for linear regression output only the best fit line, which is useful for making predictions but not sufficient for testing independence of two variables.)

Finally, we run a series of experiments to measure the statistical power of our algorithm.  That is, for a given privacy parameter and effect size, we show the required size of the database for the algorithm to indicate a statistically significant result.  This provides a quantitative analysis of the utility-privacy tradeoff under our algorithm and provides a basis against which to compare future algorithms that might seek to improve upon this work. Our code is freely available at {\small {\color{blue}{\texttt{\url{https://github.com/campbza/Differentially-private-ANOVA}}}}.

%
%
%
%

\section{Background}

Below we provide technical background on the two main topics of our paper, differential privacy and the ANOVA statistical test.

\subsection{Differential Privacy}

Imagine your data is part of a database about which some information is released.  Intuitively, differential privacy guarantees that the published output would be similarly likely regardless of the specific values of your data.  If the output is similarly likely regardless of your data's value, then someone seeing that output is unable to meaningfully infer anything about you.  (This inability to learn from the output has been formalized and proven to be a consequence of differential privacy.  See \cite{kasiviswanathan2008note} for more details.)  Differential privacy was introduced by Dwork et al.~in 2006, and the core definition and theorems presented below were all present in that initial work. \cite{dwork2006calibrating}

Formally, we consider two databases $D$ and $D'$, each consisting of $n$ rows.  (A ``row'' consists of all information connected to a given record.)  We say $D$ and $D'$ are ``neighboring'' if they are the same except for one altered row.  A function $f$ on the database is said to be \textit{differentially private} if the probability of any output of $f$ is roughly the same for any neighboring inputs $D$ and $D'$.

\begin{definition}[Differential Privacy]
A (randomized) algorithm $f$ with range $R$ is $\epsilon$-differentially private if 
for all $S \subseteq R$ and for all neighboring databases $D$ and $D'$ 
\[
\Pr[f(D) \in S] \leq e^\epsilon \Pr[f(D')\in S].
\]
\end{definition}

We call $\epsilon$ the \textit{privacy parameter}, which is chosen by the user.  Differential privacy has many useful properties.  Some, like those mentioned above, guarantee that it really is sufficient to guarantee privacy.  Others allow for easier creation of private algorithms and easier use of those algorithms in practice.  Two properties in particular will be useful to us here.
\begin{theorem}[Composition]\label{thm:composition}
If $f$ is $\epsilon_1$-differentially private and $g$ is $\epsilon_2$-differentially private, then if $h$ simply returns the outputs of both $f$ and $g$ (i.e., $h(D) = (f(D), g(D))$) then $h$ is $(\epsilon_1+\epsilon_2)$-differentially private.
\end{theorem}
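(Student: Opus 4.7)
The plan is to reduce the joint statement about $h$ to the marginal guarantees on $f$ and $g$ by exploiting the independence of their internal randomness. The standard formulation of composition implicitly assumes that the coin tosses of $f$ and $g$ are drawn independently, so I would begin by making that assumption explicit: write $h(D) = (f(D;r_f), g(D;r_g))$ where $r_f$ and $r_g$ are independent random strings. Under this assumption, for any pair of outcomes $(a,b)$ in the joint range, the joint probability factors as $\Pr[h(D)=(a,b)] = \Pr[f(D)=a]\,\Pr[g(D)=b]$, and similarly for $D'$.

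Next, I would take an arbitrary measurable $S \subseteq R_f \times R_g$ and neighboring databases $D, D'$, and write
\[
\Pr[h(D) \in S] = \int_{R_g} \Pr[f(D) \in S_b]\, d\mu_g(b),
\]
where $S_b = \{a : (a,b) \in S\}$ is the $b$-section and $\mu_g(\cdot)$ denotes the marginal distribution of $g(D)$. (For discrete ranges this is just a sum; for continuous ranges one works with densities.) Applying the $\epsilon_1$-differential privacy of $f$ inside the integrand gives $\Pr[f(D) \in S_b] \leq e^{\epsilon_1}\Pr[f(D') \in S_b]$ for every $b$, so the whole expression is bounded by $e^{\epsilon_1}\int_{R_g}\Pr[f(D') \in S_b]\,d\mu_g(b)$. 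The measure $\mu_g$ is still the distribution of $g(D)$, not $g(D')$, so one more application of the definition of $\epsilon_2$-differential privacy to $g$, viewed as replacing $d\mu_g(b)$ by $d\mu_{g'}(b)$ pointwise (using the fact that the $f$-section probability is nonnegative and bounded), yields the extra factor of $e^{\epsilon_2}$. Combining, $\Pr[h(D) \in S] \leq e^{\epsilon_1 + \epsilon_2}\Pr[h(D') \in S]$, which is exactly the conclusion.

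The conceptually subtle step (the main potential obstacle) is the second application: swapping the measure $d\mu_g$ for $d\mu_{g'}$ under the integral has to be justified as a statement about the joint distribution, not just a pointwise inequality. The cleanest way to handle this is to rewrite the integral as $\Pr[(f(D'), g(D)) \in S]$ after the first step and then apply the privacy of $g$ to the set $T = \{b : \Pr[f(D') \in S_b] > 0\}$ via a stratification argument, or, more uniformly, to argue directly from the product-measure formulation $\Pr[h(D) \in S] = \iint \mathbf{1}_S(a,b)\,d\mu_f(a)\,d\mu_g(b)$ and apply each privacy bound in turn to the two iterated integrals. Either route is elementary; I would choose whichever the rest of the paper's notation makes least cumbersome.
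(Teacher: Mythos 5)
The paper does not actually prove this theorem: it is stated as a known property of differential privacy, imported from Dwork et al.~\cite{dwork2006calibrating}, so there is no in-paper argument to compare yours against. That said, your proposal is a correct rendition of the standard proof. The factorization of the joint distribution under independent randomness, the application of the $\epsilon_1$-guarantee to each section $S_b$ (valid because the definition quantifies over all subsets of the range), and the final swap of the distribution of $g(D)$ for that of $g(D')$ are all sound. The subtlety you flag in the last step is real but routine: the set-wise inequality $\Pr[g(D)\in T]\leq e^{\epsilon_2}\Pr[g(D')\in T]$ for all $T$ extends to $\int \phi\, d\mu_g \leq e^{\epsilon_2}\int \phi\, d\mu_{g'}$ for any nonnegative bounded measurable $\phi$ by approximation with simple functions, which is exactly the stratification you describe. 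In the purely discrete case the whole argument collapses to the two-line pointwise bound $\Pr[h(D)=(a,b)]\leq e^{\epsilon_1+\epsilon_2}\Pr[h(D')=(a,b)]$ summed over $S$, which is the form most expositions give; your version is the measure-theoretically careful generalization of that.
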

We will use this property to combine algorithms for intermediate calculations into an algorithm for ANOVA as a whole.  It also means that our ANOVA algorithm can be combined with other differentially private algorithms to conduct larger, more complex statistical analyses.

The other useful property is resistance to post-processing:
\begin{theorem}[Post-processing]
If $f$ is $\epsilon$-differentially private and $g$ is an arbitrary function, then if $h = g \circ f$ (i.e., $h(D) = g(f(D))$) then $h$ is also $\epsilon$-differentially private.
\label{thm:postprocessing}
\end{theorem}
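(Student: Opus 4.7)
The plan is to reduce the question about $h$ directly to the guarantee we already have for $f$, by pulling back the set of ``bad'' outputs through $g$. Fix neighboring databases $D$ and $D'$ and an arbitrary subset $S$ of the range of $h$. The key observation is that the event $h(D) \in S$ is exactly the event $f(D) \in g^{-1}(S)$, so if we let $T = g^{-1}(S)$ then $\Pr[h(D) \in S] = \Pr[f(D) \in T]$ and similarly for $D'$. Applying the $\epsilon$-differential privacy of $f$ to the set $T$ then gives
\[
\Pr[h(D) \in S] = \Pr[f(D) \in T] \leq e^\epsilon \Pr[f(D') \in T] = e^\epsilon \Pr[h(D') \in S],
\]
which is exactly the $\epsilon$-differential privacy of $h$.

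If one wants to allow $g$ to itself be randomized (which is the strongest and most useful form of post-processing), the same argument goes through after conditioning on the internal randomness $r$ of $g$. For each fixed $r$, the map $g_r$ is deterministic, so the argument above shows $\Pr[g_r(f(D)) \in S] \leq e^\epsilon \Pr[g_r(f(D')) \in S]$; averaging over $r$ (which is independent of the database) preserves the inequality by linearity of expectation.

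The step I expect to be the main (and really the only) obstacle is the measurability bookkeeping: one has to know that $g^{-1}(S)$ is a valid set in the range of $f$ to which the definition of differential privacy can be applied. For finite or discrete ranges this is automatic; in the general measurable setting it follows from restricting attention to measurable $g$, which is the standard convention. Once that is set up, the proof is essentially a one-line substitution, which is why post-processing is typically invoked without comment.
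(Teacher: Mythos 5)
The paper does not actually prove this theorem; it imports it as a known property of differential privacy from Dwork et al.\ and uses it only as a lemma (to argue that computing $\widehat{F}$ and the $p$-value from the noisy outputs costs no additional privacy). Your proof is the standard and correct argument: pulling the output set back through $g$ via $T = g^{-1}(S)$ and applying the definition of differential privacy for $f$ to $T$ is exactly how this is proved in the literature, and your extension to randomized $g$ by conditioning on its internal randomness (equivalently, writing a randomized map as a mixture of deterministic ones) is also the standard route. Your remark about measurability of $g^{-1}(S)$ is the right caveat; in the discrete settings this paper cares about it is vacuous. In short, the proposal is correct and supplies a proof the paper omits.
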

This property is primarily a sanity check on the definition, since a good definition of privacy must have this property.  In our case, it also means that our process for converting a noisy $F$ statistic into a $p$-value maintains privacy.

We build our algorithm by using the standard Laplacian mechanism for individual components and then combining the result.  This technique relies on knowing the \textit{sensitivity} of certain computations.  This sensitivity is defined as the maximum effect that can occur when a single row is changed.
\begin{definition}[Sensitivity]
The sensitivity of a (deterministic) function $f$ on databases with real number outputs is the maximum over neighboring databases $D$ and $D'$ of $|f(D)-f(D')|$.
\end{definition}

We also define the Laplace distribution, for which this technique is named:
\begin{definition}
The \emph{Laplace distribution} (centered at 0) with scale $b$ is denoted $\text{Lap}(b)$ and has probability density function 
\[
f(x) = e^{-|x|/b}/2b.
\]
\end{definition}

We can now give the Laplacian mechanism, which gives a differentially private algorithm for any function with a known bound on sensitivity.
\begin{theorem}[Laplacian mechanism]\label{thm:lapmech}
Let $f$ be a function with sensitivity at most $s$.  Let $L$ be a random variable drawn from $\text{Lap}(s/\epsilon)$.  Then the function $f'(D) = f(D) + L$ is $\epsilon$-differentially private.
\end{theorem}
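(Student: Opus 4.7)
The plan is to prove this by a pointwise density comparison, following the standard template for additive-noise mechanisms. Let $D$ and $D'$ be neighboring databases, and let $p_D$ and $p_{D'}$ denote the probability density functions of $f'(D)$ and $f'(D')$ respectively. Since $f'(D) = f(D) + L$ with $L \sim \text{Lap}(s/\epsilon)$ and $f(D)$ deterministic, the density $p_D$ is just the Laplace density shifted to be centered at $f(D)$, so
\[
p_D(x) = \frac{\epsilon}{2s}\exp\!\bigl(-\epsilon|x - f(D)|/s\bigr),
\]
and analogously for $p_{D'}$.

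Next I would form the ratio $p_D(x)/p_{D'}(x)$ at an arbitrary point $x$; the normalizing constants cancel and one is left with
\[
\frac{p_D(x)}{p_{D'}(x)} = \exp\!\Bigl(\tfrac{\epsilon}{s}\bigl(|x - f(D')| - |x - f(D)|\bigr)\Bigr).
\]
The key inequality is the reverse triangle inequality $\bigl||x - f(D')| - |x - f(D)|\bigr| \leq |f(D) - f(D')|$, combined with the sensitivity bound $|f(D) - f(D')| \leq s$. Together these give $|x - f(D')| - |x - f(D)| \leq s$, so the ratio is at most $e^\epsilon$ for every $x$. I expect this reverse-triangle-inequality step to be the only nontrivial piece, and it is a one-line observation rather than a real obstacle.

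Finally, to move from a pointwise density bound to the differential-privacy statement about sets, I would integrate: for any measurable $S \subseteq \mathbb{R}$,
\[
\Pr[f'(D)\in S] = \int_S p_D(x)\,dx \leq \int_S e^\epsilon p_{D'}(x)\,dx = e^\epsilon \Pr[f'(D')\in S].
\]
This yields the definition of $\epsilon$-differential privacy and completes the argument. The proof requires no additional tools beyond the explicit form of the Laplace density, the triangle inequality, and monotonicity of the integral, so the overall structure is short and the main conceptual content is simply recognizing that the exponential tails of the Laplace distribution are calibrated exactly to absorb a shift of size $s$.
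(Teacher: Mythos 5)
Your proof is correct and is the standard density-ratio argument for the Laplace mechanism: the reverse triangle inequality combined with the sensitivity bound controls the pointwise ratio by $e^\epsilon$, and integrating over $S$ gives the definition. The paper does not prove this theorem itself --- it states it as a known result from Dwork et al. --- so there is no alternative argument to compare against; yours is the canonical one and is complete.
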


\subsection{ANOVA Testing}

A one-way analysis of variance (ANOVA) evaluates whether the data is consistent with a null hypothesis under which all groups of the categorical variable share the same mean for some continuous response variable \cite{snedecorcochran}.  For example, it might test whether a choice of treatment (ex., surgery, medication, or none) had an effect on the mean life expectancy of patients with a particular disease. Deviation from the null hypothesis is measured with an $F$ statistic. If the resulting $F$ statistic is sufficiently unlikely to occur under the null hypothesis, the analyst can safely conclude that average response is not independent of group membership.


A database $D$ for an ANOVA test contains $n$ rows, each containing a categorical variable that takes on one of $k$ allowed values and a continuous variable.  (We will throughout this paper assume that the continuous variable has been normalized and has allowed range $[0,1]$.)  We let $y_{ij}$ represent the $j^\text{th}$ row in category $i$.  We let $\overline{y_i}$ represent the mean of group $i$, and $\overline{y}$ represent the grand mean (i.e, the mean of all values).  We let $n_i$ be the number of values in group $i$.

An ANOVA test produces an $F$ statistic.  This can be calculated from two intermediate values.  First is SSA, the sum of the squared error of all sample means compared to the grand mean weighted by the size of each group.  This measures the variation between the means of each group.
\[
\text{SSA}(D) = \sum_{i=1}^{k} n_i(\overline{y}_i - \overline{y})^2.
\]

The second intermediate value is SSE, the sum squared error of all values compared to the grand mean.  This measures the variation among the data points as a whole. 
\[
\text{SSE}(D) = \sum_{i=1}^{k} \sum_{j=1}^{n_i} (y_{ij} - \overline{y}_i)^2.
\]

The $F$ statistic compares these two values.  Intuitively, if all groups are distributed identically, then the random variation between groups should scale proportionately to the variation in individual values.  If the groups have different means, SSA will grow faster than SSE.  Therefore $F$ is a ratio of SSA to SSE, with each adjusted to account for their respective degrees of freedom.
\begin{equation*}
F(D) = \frac{\text{SSA}(D)/(k-1)}{\text{SSE}(D)/(n-k)}
\end{equation*}

%

When sample sizes are large, the distribution of the $F$ statistic under the null hypothesis (given known values of $k$ and $n$) is well-known.  Given a particular $F$ statistic, one can compute the probability that a value that high or higher would have occurred by random chance if the null hypothesis was true.  This is the $p$-value.  A low $p$-value (often 0.05 or less) is seen as ``significant'' and is taken as reasonable evidence that the null hypothesis can be rejected.

\section{Our Algorithm}
Our algorithm uses the Laplacian mechanism defined previously. In order to do so, we first prove sensitivity bounds on our queries.  We assume all data has been normalized so that $y_{ij}$ values fall in $[0,1]$ and that the number of possible groups $k$ is fixed and public, but the number of samples from each group is not known.  The most straightforward approach would be to bound the sensitivity of $F$, but this is very difficult.  Instead we bound the sensitivity of SSA and SSE.

\begin{theorem}\label{thm:SSEbound}
	Recall that 
	\[
		\text{SSE}(D) = \sum_{i=1}^{k} \sum_{j=1}^{n_i} (y_{ij} - \overline{y}_i)^2.
	\]
	SSE has sensitivity bounded above by 7.
\end{theorem}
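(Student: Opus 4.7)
The plan is to bound the sensitivity of SSE by expanding it via the algebraic identity
\[
\text{SSE}(D) = \sum_{i=1}^k \sum_{j=1}^{n_i} y_{ij}^2 \; - \; \sum_{i=1}^k \frac{S_i^2}{n_i},
\]
where $S_i = \sum_j y_{ij}$ is the within-group sum. This decomposition is attractive because the first sum is trivially bounded coordinate-by-coordinate, while the second sum is a per-group quantity that only changes for groups whose row count is altered.

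Next I would describe the structure of a neighboring perturbation: since $D$ and $D'$ differ in one row, the change amounts to removing a value $y \in [0,1]$ from some group $a$ and inserting a value $y' \in [0,1]$ into some group $b$. When $a=b$ this corresponds to a pure value change within a group; when $a \neq b$ it is a relabeling (possibly with a different value). In either case only the two entries $y$ and $y'$ enter $\sum y_{ij}^2$, contributing at most $|y'^2 - y^2| \leq 1$. Likewise only the terms $S_a^2/n_a$ and $S_b^2/n_b$ change in the weighted sum, so by the triangle inequality it suffices to bound each such change by a constant.

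For each affected group, I would substitute $S_i = n_i \overline{y}_i$ and expand. For removal from group $a$, the expression $S_a^2/n_a - (S_a - y)^2/(n_a - 1)$ rewrites as $n_a \overline{y}_a(y - \overline{y}_a)/(n_a - 1) + y\,\overline{y}_a'$, whose magnitude is at most $n_a/(n_a - 1) + 1 \leq 3$ for $n_a \geq 2$, using $\overline{y}_a, \overline{y}_a', y \in [0,1]$. A symmetric calculation bounds the insertion into group $b$ by 3. Combining then gives $|\text{SSE}(D) - \text{SSE}(D')| \leq 1 + 3 + 3 = 7$. The degenerate cases where an affected group has size $0$ or $1$ (group appearing or disappearing) must be checked separately, but they yield smaller bounds than the generic case.

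The main obstacle is the per-group algebra: naive bounds such as $|S_i| \leq n_i$ are too loose because they grow unboundedly with group size, so one must pair $S_i$ with a mean difference that scales as $1/(n_i \pm 1)$ so that the product stays bounded. I suspect a sharper analysis working directly with the per-group SSE contribution (which can be shown to change by $(y - \overline{y}_i)^2 \cdot n_i/(n_i - 1)$ under removal) would actually give a constant closer to $4$, but the looser bound of $7$ is entirely adequate for calibrating the Laplace mechanism in the algorithm that follows.
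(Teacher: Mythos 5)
Your proof is correct, and it reaches the same bound of $1+3+3=7$ by a genuinely different route. The paper treats SSE as $n$ squared-deviation terms and tracks how a single changed row perturbs each one: the changed row's own term moves by at most $1$, and each of the $n_i$ (resp.\ $n_j$) other terms in the affected groups has its base $y_{ij}-\overline{y}_i$ shifted by at most $1/n_i$, so the elementary estimate $(a+b)^2-a^2 = 2ab+b^2 \leq 2b+b^2$ summed over the group gives $2+1/n_i < 3$ per group. You instead use the exact decomposition $\text{SSE}=\sum_{i,j}y_{ij}^2-\sum_i S_i^2/n_i$, which localizes the entire change to one term of the first sum and two terms of the second, and you compute the group-level differences in closed form (your identity $S_a^2/n_a-(S_a-y)^2/(n_a-1)=n_a\overline{y}_a(y-\overline{y}_a)/(n_a-1)+y\,\overline{y}_a'$ checks out, since $\overline{y}_a-\overline{y}_a'=(y-\overline{y}_a)/(n_a-1)$). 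The trade-offs: the paper's perturbation argument is more elementary and is reused nearly verbatim for the SSA bound (Theorem \ref{thm:SSAbound}), where your decomposition does not carry over as cleanly because SSA couples every group through the grand mean; your exact algebra, on the other hand, avoids the linearization entirely and makes visible that the bound is loose --- as you note, the per-group SSE contribution changes by exactly $\tfrac{n_i}{n_i-1}(y-\overline{y}_i)^2 \leq 2$ under removal, so a constant of $4$ (or better) is attainable, which would directly reduce the noise scale in Algorithm 1. Your deferral of the size-$0$ and size-$1$ group cases is acceptable since those contribute at most $1$ each, but for a final write-up you should state that check explicitly.
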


\begin{proof}
We must analyze the effect of changing a single data point on SSE.  There is a term in the sum for each data point, each with a value in $[0,1]$.  The term corresponding to the changed data point might change arbitrarily, so we bound this term's contribution to the change by 1.

Next we look at the effect that change has on the other terms through its effect on $\overline{y_i}$ and $\overline{y_j}$, where the data point's change moves it from group $i$ to group $j$.  (This is the worst case --- changes that don't change the rows groups would have lesser effect.)  Let $n_i$ and $n_j$ be the number of terms (i.e., data points) in each group excluding the data point being changed.  We can bound the change's effect on $\overline{y_i}$ by $1/n_i$.  We are are therefore changing the value being squared in each term, $y_{ij} - \overline{y}_i$, by at most $1/n_i$.  In general, if you change $a^2$ to $(a+b)^2$, the change is $2ab + b^2$.  Here we have $a = y_{ij} - \overline{y}_i \leq 1$, so we can bound this by $2b+b^2$.  Plugging in $b=1/n_i$ gives us a per-term change of
\begin{equation*}
2/n_i + 1/n_i^2.
\end{equation*}
We then have $n_i$ terms affected by this change, for a total effect of $2 + 1/n_i < 3$.

	Similarly, the effect on the terms in group $j$ is bounded by 3.  This gives us a change of at most 1 for term corresponding to the row being changed and a change of at most 3 for each of the two groups of other terms this row affects, for a total sensitivity of 7.
\end{proof}

%

\begin{theorem}\label{thm:SSAbound}
	Recall that
	\[
		\text{SSA}(D) = \sum_{i=1}^{k} n_i(\overline{y}_i - \overline{y})^2.
	\]
	The sensitivity of SSA is bounded above by $9 + 5/n$.
\end{theorem}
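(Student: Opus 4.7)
The plan is to follow the template of the SSE proof: decompose $\text{SSA}(D) = \sum_i n_i(\overline{y}_i - \overline{y})^2$ into its per-group summands and bound how each one shifts under a single-row change. As in the SSE argument, the worst case is when the altered row moves from some group $i$ to some other group $j$, with its value going from $v \in [0,1]$ to $v' \in [0,1]$. A short calculation from the definitions then yields the basic sensitivities I will use repeatedly: $|\overline{y}' - \overline{y}| \le 1/n$, $|\overline{y}_i - \overline{y}_i'| \le 1/(n_i - 1)$, and $|\overline{y}_j - \overline{y}_j'| \le 1/(n_j + 1)$.

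For every group $\ell \notin \{i, j\}$, only the grand mean moves inside the summand. Expanding
\[
n_\ell(\overline{y}_\ell - \overline{y})^2 - n_\ell(\overline{y}_\ell - \overline{y}')^2 \;=\; n_\ell(\overline{y}' - \overline{y})(2\overline{y}_\ell - \overline{y} - \overline{y}')
\]
and using $|\overline{y}' - \overline{y}| \le 1/n$ with $|2\overline{y}_\ell - \overline{y} - \overline{y}'| \le 2$ bounds each such contribution by $2n_\ell/n$. Summing across these groups gives at most $2(n - n_i - n_j)/n$.

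The hard summands are those for groups $i$ and $j$, since both the coefficient and the group mean shift simultaneously. For the group-$i$ term I would apply the algebraic identity
\[
n_i a^2 - (n_i - 1)(a')^2 \;=\; a^2 + (n_i - 1)(a - a')(a + a'),
\]
with $a = \overline{y}_i - \overline{y}$ and $a' = \overline{y}_i' - \overline{y}'$. This isolates the ``coefficient drop'' as the leading $a^2$ (trivially $\le 1$) and leaves the ``mean drift'' as the cross term; combining $|a + a'| \le 2$ with $|a - a'| \le 1/(n_i - 1) + 1/n$ bounds it by $3 + 2(n_i - 1)/n$. A mirrored calculation, using $-( b')^2 + n_j(b^2 - (b')^2)$ in place of the identity above, gives at most $3 + 2n_j/n$ for group $j$.

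Summing the three contributions, the $n_\ell/n$ pieces telescope as
\[
\tfrac{2(n_i - 1)}{n} + \tfrac{2n_j}{n} + \tfrac{2(n - n_i - n_j)}{n} \;=\; \tfrac{2(n-1)}{n},
\]
so the overall bound comes out to $6 + 2(n-1)/n = 8 - 2/n$, which sits comfortably inside the stated $9 + 5/n$. The extra slack naturally covers the boundary cases $n_i = 1$ (group $i$ disappears after the row leaves) and $n_j = 0$ (group $j$ is empty beforehand), where the identity does not directly apply and the affected summand must be bounded by its value directly; that value is at most $1$, which only tightens the total. The main obstacle is exactly this simultaneous change of coefficient and mean in the two altered groups: each summand itself can be of order $n_\ell$, so without the identity above to peel off the size-one ``dropped point'' piece from the $O(1/n_\ell)$ mean drift, the bound would grow with $n$ rather than stay $O(1)$.
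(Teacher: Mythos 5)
Your proposal is correct and proves the theorem (in fact with the slightly tighter constant $8 - 2/n \le 9 + 5/n$), but it executes the decomposition differently from the paper. The paper rewrites SSA as a sum of $n$ per-data-point terms $(\overline{y}_i - \overline{y})^2$, bounds the changed point's own term by $1$, and then bounds every remaining term's drift via the inequality $(a+b)^2 - a^2 \le 2b + b^2$ with $b = 1/n_i + 1/n$ for the two affected groups and $b = 1/n$ elsewhere; summing the $b^2$ cross terms is what inflates its constant to $9 + 5/n$. You instead keep the per-group summands $n_\ell(\overline{y}_\ell - \overline{y})^2$ and use the exact identity $n_i a^2 - (n_i-1)(a')^2 = a^2 + (n_i-1)(a-a')(a+a')$ to peel off the departing point's unit contribution from the $O(1/n_i)$ mean drift, with a mirrored identity for the receiving group and a clean difference-of-squares bound of $2n_\ell/n$ for the untouched groups. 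The two routes rest on the same idea --- isolate the changed row's $O(1)$ contribution and control everything else through mean perturbations of order $1/n_i$ and $1/n$ --- but your exact-identity bookkeeping avoids the paper's accumulated $b^2$ terms and yields a cleaner telescoping sum, and you explicitly dispatch the degenerate cases $n_i = 1$ and $n_j = 0$ that the paper passes over. Both proofs share the same minor informality of asserting, rather than proving, that a row switching groups is the worst case.
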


\begin{proof}
This proof follows the same logic as the previous one.  Instead of thinking of $k$ terms, each weighted by $n_i$, think of the sum as $n$ terms, with each group resulting in $n_i$ identical terms:
\begin{equation*}
g(d) = \sum_{i=1}^{k} \sum_{j=1}^{n_i} (\overline{y}_i - \overline{y})^2
\end{equation*}
Again, the term corresponding to our changed data point can change by 1.  Other terms in group $i$ see $\overline{y_i}$ changed by $1/n_i$ and $\overline{y}$ changed by $1/n$.  Again this is a change from $a^2$ to $(a+b)^2$ but this time $b=1/n_i + 1/n$.  This gives us a per-term change of:
\begin{align*}
&2(1/n_i + 1/n) + (1/n_i + 1/n)^2 \\
= &2/n_i + 2/n + 1/n_i^2 + 2/nn_i + 1/n^2
\end{align*}
We then sum this per-term change over $n_i$ terms to get a change of
\begin{eqnarray*}
2 + 2n_i/n + 1/n_i + 2/n + n_i/n^2 \\
\leq 3 + 2n_i/n + 2/n + n_i/n^2
\end{eqnarray*}
There is an identical bound on the total change in the terms representing group $j$, the group the data is moving to.  Then we consider the other terms.  Say $n^*$ represents the number of terms not in either of these groups.  Each of those terms sees $\overline{y}$ change by at most $1/n$.  With $b=1/n$ we get a total change of $2n^*/n + n^*/n^2$.

We then add together those four separate bounds: the bound of 1 on the term's own change; the bound on the total change of terms in group $i$; the equivalent bound on change in group $j$; and the bound on the rest of the terms.
\begin{align*}
1 + (3 + 2n_i/n + &2/n + n_i/n^2) + \\
&(3 + 2n_j/n + 2/n + n_j/n^2) + \\
&2n^*/n + n^*/n^2 \\
&\leq 7 + \frac{2(n_i + n_j + n^*)}{n} + 4/n + \frac{n_i + n_j + n^*}{n^2}
\end{align*}
Given that $n_i+n_j+n^* < n$, this is bounded by
\begin{equation*}
9 + 5/n.
\end{equation*}
\end{proof}


Using these sensitivity bounds it is straightforward to construct our algorithm. We compute estimates $\widehat{\text{SSE}}$ and $\widehat{\text{SSA}}$ of SSE and SSA using the Laplacian mechanism, then combine the results to estimate the $F$ statistic.  See Algorithm 1 for formal details.

\begin{algorithm}\label{alg:Fhat}
	\caption{Differentially private ANOVA}
	\begin{algorithmic}
		\STATE \textbf{Input:} Database $D$, $\epsilon$ value
		\STATE Compute $\widehat{\text{SSA}} = \text{SSA}(D) + Z_1$ where $Z_1\sim\text{Lap}\left(\frac{9 + 5/n}{\epsilon/2}\right)$
		\STATE Compute $\widehat{\text{SSE}} = \text{SSE}(D) + Z_2$ where $Z_2\sim\text{Lap}(7/(\epsilon/2))$
		\STATE Compute $\widehat{F} = \frac{\widehat{\text{SSA}}/(k-1)}{\widehat{\text{SSE}}/(n-k)}$
		\STATE \textbf{Output:} $\widehat{F}, \widehat{\text{SSA}}, \widehat{\text{SSE}}$
	\end{algorithmic}
\end{algorithm}

\noindent We now prove that this algorithm is $\epsilon$-differentially private.

\begin{theorem}{}
	Algorithm 1 is $\epsilon$-differentially private.  
\end{theorem}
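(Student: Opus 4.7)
The plan is to assemble the proof from three of the ingredients already established: the sensitivity bounds in Theorems~\ref{thm:SSEbound} and~\ref{thm:SSAbound}, the Laplacian mechanism (Theorem~\ref{thm:lapmech}), and the composition and post-processing theorems (Theorems~\ref{thm:composition} and~\ref{thm:postprocessing}). The structure is: bound privacy cost of each noisy sufficient statistic, sum those costs via composition, and argue that forming $\widehat{F}$ costs nothing extra.

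First, I would handle $\widehat{\text{SSA}}$. Theorem~\ref{thm:SSAbound} gives sensitivity at most $9 + 5/n$, and the noise is drawn from $\text{Lap}((9+5/n)/(\epsilon/2))$, which matches the scale $s/\epsilon'$ required by Theorem~\ref{thm:lapmech} with $s = 9 + 5/n$ and $\epsilon' = \epsilon/2$. Thus the release of $\widehat{\text{SSA}}$ alone is $(\epsilon/2)$-differentially private. The identical argument applied to $\widehat{\text{SSE}}$ using Theorem~\ref{thm:SSEbound} (sensitivity $7$, noise scale $7/(\epsilon/2)$) shows that releasing $\widehat{\text{SSE}}$ alone is also $(\epsilon/2)$-differentially private.

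Next, I would invoke composition (Theorem~\ref{thm:composition}) on the two mechanisms above: the algorithm that returns the pair $(\widehat{\text{SSA}}, \widehat{\text{SSE}})$ is $(\epsilon/2 + \epsilon/2) = \epsilon$-differentially private. Finally, $\widehat{F} = (\widehat{\text{SSA}}/(k-1))/(\widehat{\text{SSE}}/(n-k))$ is a deterministic function of $(\widehat{\text{SSA}}, \widehat{\text{SSE}})$ together with the public quantities $k$ and $n$, so appending $\widehat{F}$ to the output is post-processing. By Theorem~\ref{thm:postprocessing}, the full output $(\widehat{F}, \widehat{\text{SSA}}, \widehat{\text{SSE}})$ remains $\epsilon$-differentially private.

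There is no real obstacle here; the hard work has already been done in the sensitivity proofs. The only point that requires even minor care is the implicit assumption that $n$ (and $k$) are public, so that using $n$ inside the noise scale and the post-processing step does not itself leak information---this is consistent with the modeling assumptions stated at the beginning of Section~III, and matches standard practice in the differential privacy literature.
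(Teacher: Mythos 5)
Your proof is correct and follows the same route as the paper's: each of $\widehat{\text{SSA}}$ and $\widehat{\text{SSE}}$ is $(\epsilon/2)$-differentially private by the sensitivity bounds and the Laplacian mechanism, composition gives $\epsilon$ for the pair, and $\widehat{F}$ is post-processing. Your added remark that $n$ and $k$ must be public is a reasonable clarification consistent with the paper's stated assumptions.
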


\begin{proof}
By the sensitivity bounds of Theorems \ref{thm:SSEbound} and \ref{thm:SSAbound} and the known Laplacian mechanism (Theorem \ref{thm:lapmech}) we know that the computations of $\widehat{\text{SSA}}$ and $\widehat{\text{SSE}}$ are each $\epsilon/2$-differentially private.  By our composition theorem (Theorem \ref{thm:composition}), outputting both is $\epsilon$-differentially private.  Finally, since the computation of $\widehat{F}$ does not require access to the database, Theorem \ref{thm:postprocessing} shows that it can be added to the output without loss of privacy.
\end{proof}

Note that while $\widehat{F}$ is the value of most immediate interest, we can also output $\widehat{\text{SSA}}$ and $\widehat{\text{SSE}}$ without additional privacy loss.  These values can sometimes give other useful information about the database ``for free.''  For example, $\widehat{\text{SSE}}/(n-k)$ is an estimate of the variance of the data, which might be of independent interest.



\subsection{Computing $p$-values}
\label{sec:pvals}
Normally in an ANOVA test we get a value for $F$.  We can then refer to the known $F$-distribution to find the $p$-value, the probability that the observed value, or more extreme, would have occurred by chance under the null hypothesis. The higher the 
$F$-ratio, the lower the $p$-value. However, our 
algorithm returns an approximation, $\widehat{F}$, of the real $F$ value.  One could simply compare this value to the $F$ distribution and get an approximation of the true $p$-value, but one could also be more thorough and actually compute a distribution for $\widehat{F}$ values under the null hypothesis and compare to \textit{that} distribution.  We do the latter.

Under the null hypothesis, SSA is drawn from $\sigma^2\chi^2_{k-1}$, the chi-squared distribution with $k-1$ degrees of freedom scaled by the variance $\sigma^2$.  SSE is similarly drawn from $\sigma^2\chi^2_{n-k}$.  In the classic setting, the $\sigma^2$ factors cancel in the ratio, making the $F$ distribution independent of the variance of the underlying data points.

This is not true in our scenario.  $\widehat{SSE}$ is drawn from $\sigma^2\chi^2_{n-k} + \text{Lap}(7/(\epsilon/2))$.  The added Laplacian noise makes the resulting ratio dependent on $\sigma^2$.  Luckily, $\widehat{SSE}/(n-k)$ is an estimate of $\sigma^2$, and we can use this estimate to accurately compute a distribution for $\widehat{F}$.

To simulate a null hypothesis $\widehat{F}$ distribution we choose $\widehat{SSE}$ and $\widehat{SSA}$ from the above distributions, using $\widehat{SSE}/(n-k)$ as our estimate of $\sigma^2$.  We repeat this a large number of times (100,000 in our experiments).

When we see an output of Algorithm 1, we use the reported $\widehat{SSE}$ value to compute the above distribution for $\widehat{F}$ under the null hypothesis.  Then we look at the percentage of that distribution falling above the particular value we saw for $\widehat{F}$ and return that as our $p$-value.  Note that this entire computation of $p$-value is done independently of the database, using only the output of Algorithm 1.  Therefore it can be added to the algorithm as an additional output with no added privacy cost.

One of our key findings is that this added care in computing the $p$-value is absolutely necessary.  Consider the distributions shown in Figure~\ref{fig:fdist}.  Here we have shown the standard $F$ distribution when $n=10,000$ and $k=10$, along with distributions of the estimate $\widehat{F}$ when $\epsilon = 1$ and when $\epsilon=0.1$.  The dotted line shows the value of the $F$ statistic that would normally indicate statistical significance with $p=0.05$.  However, at this point with $\epsilon=1$ the true $p$-value we calculate is 0.13.  With $\epsilon=0.1$ the true $p$-value at that point is 0.41.

\begin{figure}
\centering
\includegraphics[width=\linewidth]{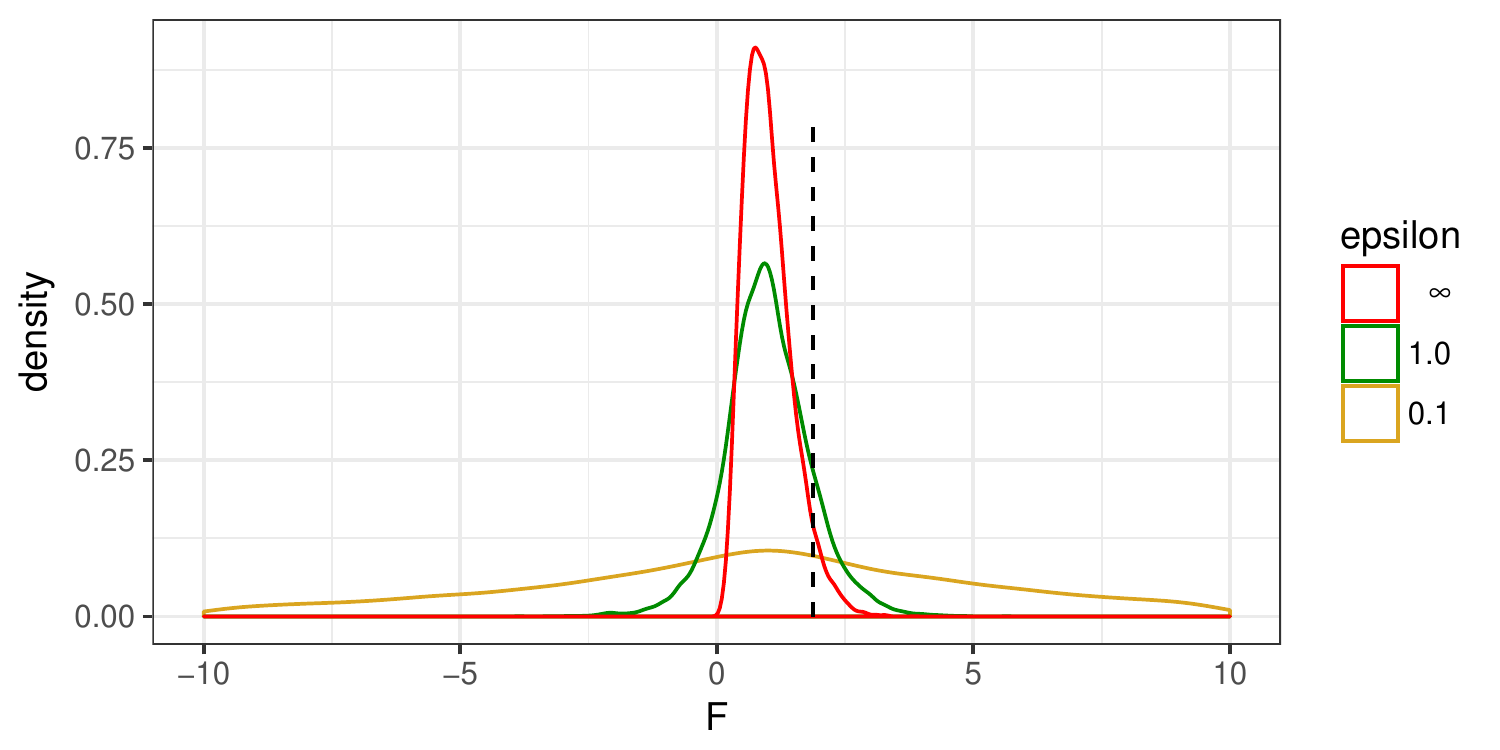}
\caption{The distribution of the $F$ statistic (red) overlaid with distributions of the estimate $\widehat{F}$ for two values of $\epsilon$ (green and gold).  In all cases, $n=10,000$ and $k=10$.  The significant difference between these curves indicates that even though $\widehat{F}$ is an estimate of $F$, ignoring the added noise in $\widehat{F}$ produces unacceptably inaccurate conclusions.}
\label{fig:fdist}
\end{figure}

\section{Results}

We assessed the power of our differentially-private ANOVA test on synthetic data as we increased the database size $n$ for different privacy parameters $\epsilon$.  We generated normally-distributed data from three equally-sized groups with the same standard deviation and three different means.  That is, the $n$ values were comprised of $n/3$ values drawn from $\mathcal{N}(0.35,0.15)$, $n/3$ values drawn from $\mathcal{N}(0.5,0.15)$, and $n/3$ values drawn from $\mathcal{N}(0.65,0.15)$.  Values were truncated to be within $[0,1]$.  We calculated estimates $\widehat{SSA}$, $\widehat{SSE}$, and $\widehat{F}$, and computed the $p$-value as described in Section~\ref{sec:pvals}.  We repeated this procedure (data generation, private ANOVA test, and $p$-value calculation) 1,000 times for ($n$,$\epsilon$) pairs and recorded the proportion of iterations that report a $p$-value less than 0.05.  

For our simulated datasets, we selected databases ranging from ten records to one million records, and privacy parameters ranging from $\epsilon = \infty$ (non-private) to $\epsilon = 0.01$.  The choice of an ``acceptable'' $\epsilon$ value is a policy question: $\epsilon=.01$ is extremely conservative and allows for safe composition with many other queries, while $\epsilon=1$ provides meaningful privacy protection but might be too high for the comfort of some.

The power curves in Figure~\ref{fig:anova-est} quantify the number of records needed to consistently yield statistically significant results for ($n$,$\epsilon$) pairs.  In the non-private case (when $\epsilon=\infty$), databases with 100 records yield consistently significant $F$ values over the 1,000 runs.  As the value of $\epsilon$ decreases, more records are required to provide statistically significant results; with $\epsilon=1$ five to ten thousand data points are needed to frequently see significance, while nearly one million records are required when $\epsilon=0.01$. 

Whether these results are exciting or disappointing depends very much on one's reference point.  On the one hand, databases with several thousand values are extremely common, meaning that we can now conduct ANOVA tests on a wide variety of real-world databases while protecting privacy, something that was not previously possible.  On the other hand, the differentially private version is substantially less powerful than the traditional non-private version.

\begin{figure}
\centering
\includegraphics[width=\linewidth]{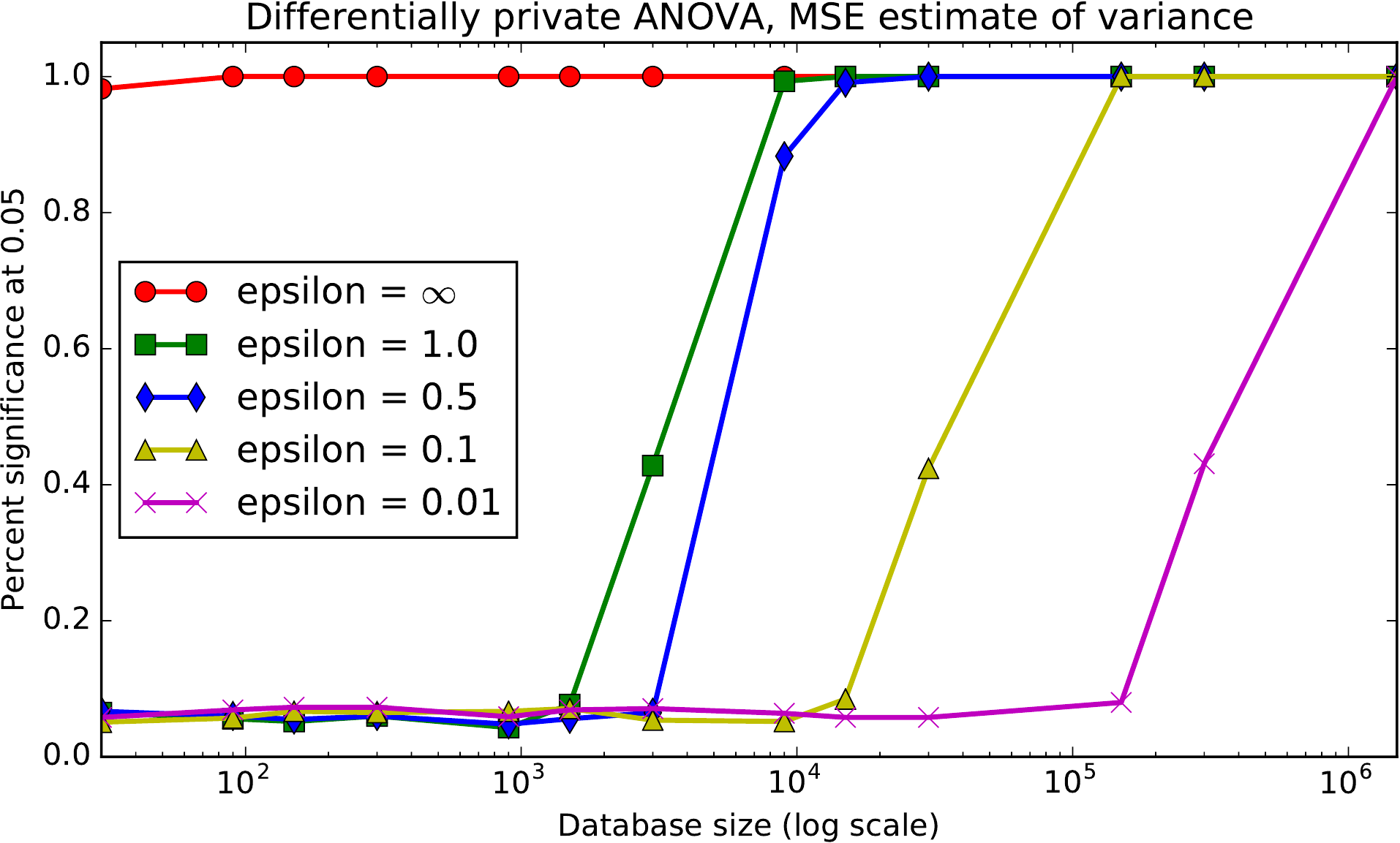}
\caption{Proportion of significant ($p<0.05$) $F$-ratios for databases of different sizes $n$ and different privacy parameters $\epsilon$.  The $p$-values were calculated by generating a distribution of $F$-ratios with variance estimated by $\widehat{SSE}/(n-k)$.}
\label{fig:anova-est}
\end{figure}

We also verified that $\widehat{SSE}/(n-k)$ is a sufficiently accurate estimate of $\sigma^2$ for the calculating $p$-values.  We did this by running the experiment a second time using the ground truth variance ($0.15^2$) in the null distribution generation; this would be impossible in the real world (where the true $\sigma^2$ is not known), but it is guaranteed to give accurate $p$-values. The result (Figure~\ref{fig:anova-real}) is extremely similar to our results using the estimated $\sigma^2$, giving us confidence that this estimate is acceptable for real world use.


\begin{figure}
\centering
\includegraphics[width=\linewidth]{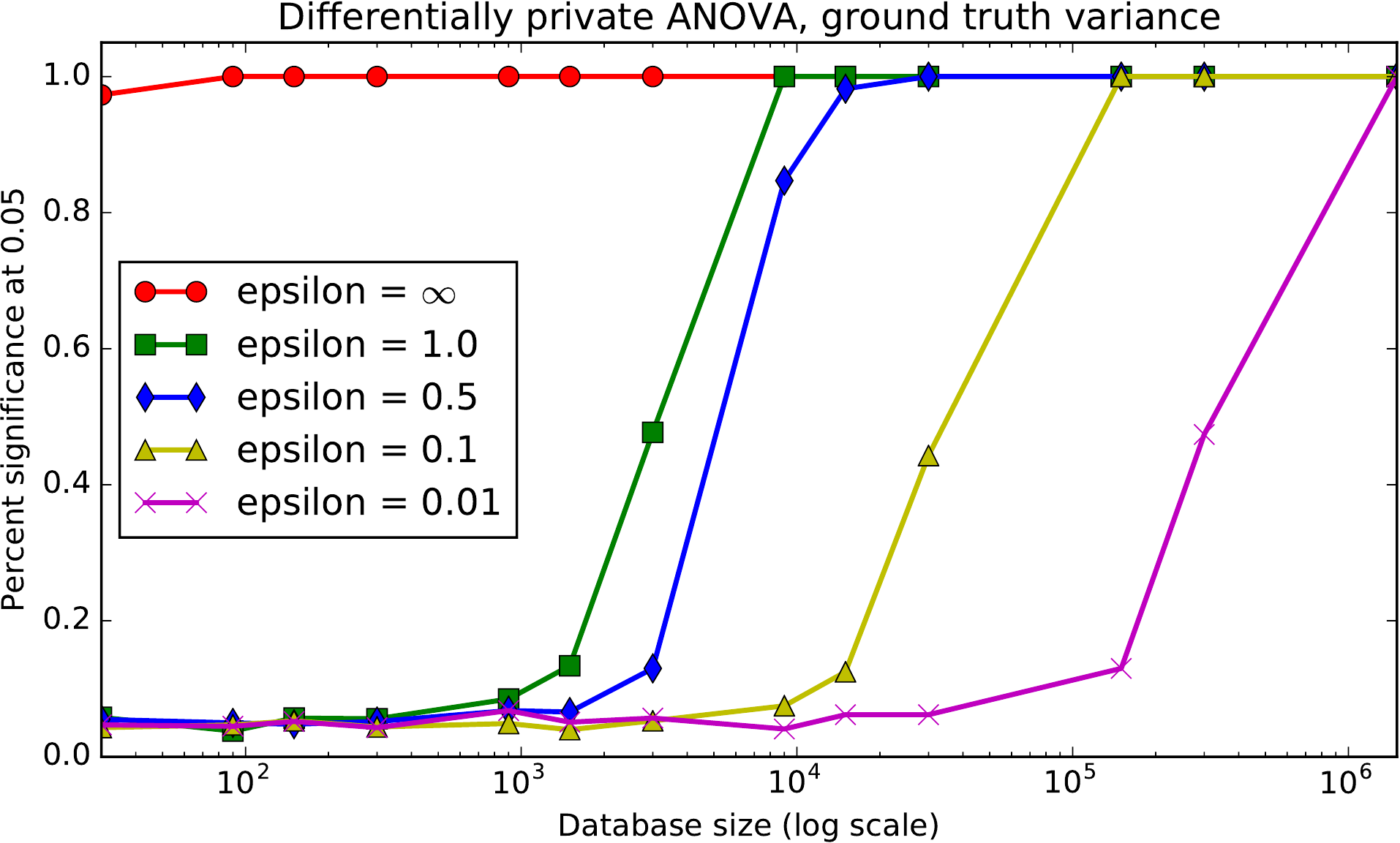}
\caption{Proportion of significant ($p<0.05$) $F$-ratios for databases of different sizes $n$ and different privacy parameters $\epsilon$.    The $p$-values were calculated by generating a distribution of $F$-ratios with the real (known) variance of $0.15^2$.}
\label{fig:anova-real}
\end{figure}


Next, we evaluated the differentially-private $F$-ratios on data with a smaller effect size. We generated six groups of values with means $[0.4,0.45,0.5,0.5,0.5,0.6]$ and standard deviation $0.2$, and again truncated values to be within $[0,1]$.  We computed the proportion of significant $p$-values as before.  The increased number of groups and the closer group means shift the power curves to the right (Figure~\ref{fig:anova-smaller-effect}), but even in this setting, a database size on the order of 10,000 would yield frequently significant $p$-values for privacy parameter $\epsilon=1.0$.
  

\begin{figure}
\centering
\includegraphics[width=\linewidth]{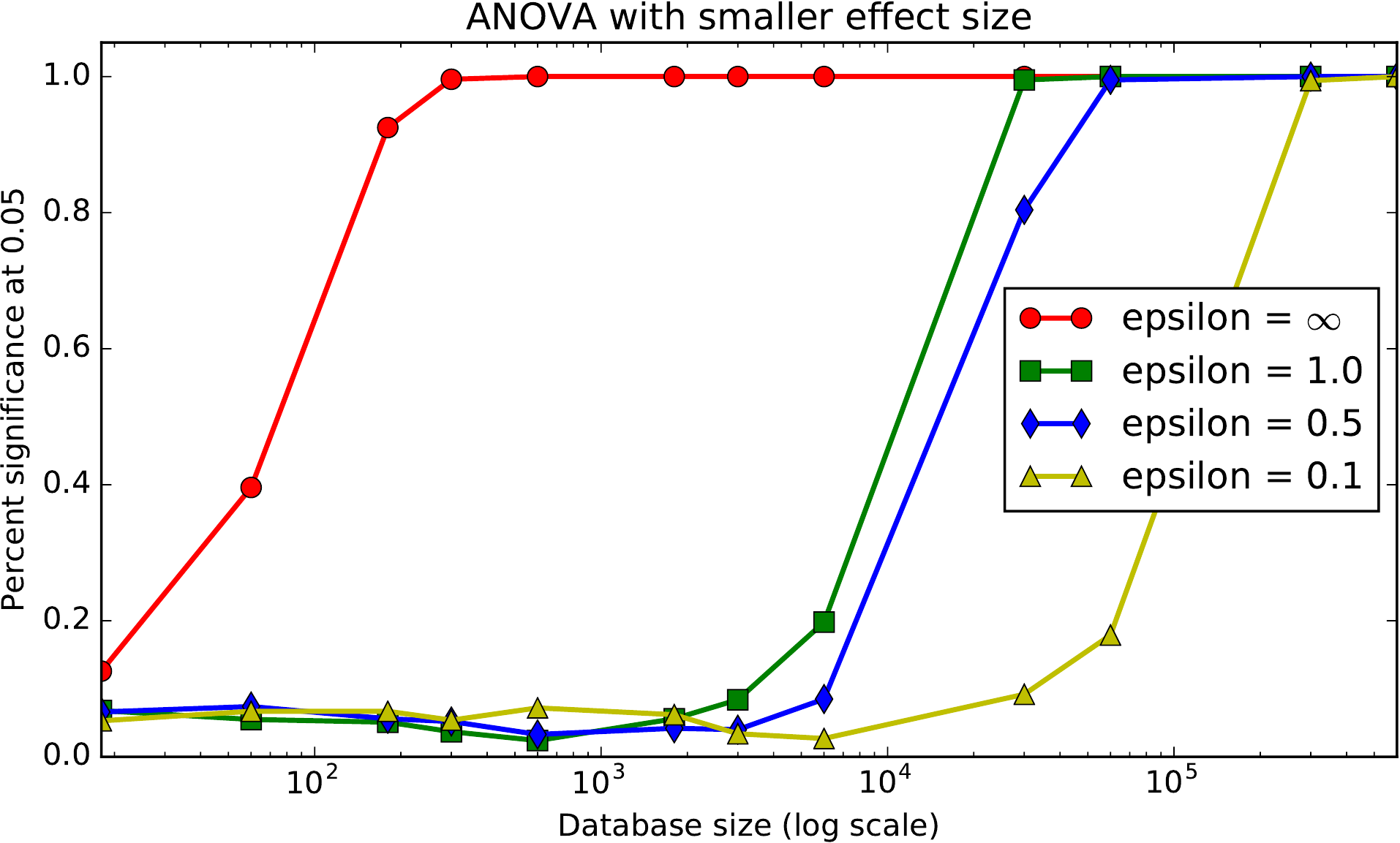}
\caption{Proportion of significant ($p<0.05$) $F$-ratios on data with a smaller effect size.}
\label{fig:anova-smaller-effect}
\end{figure}



\section{Conclusion and Future Work}

We show that for practically-sized databases, one can execute differentially private ANOVA tests with reasonable 
privacy guarantees and convincing results. 
We also quantify exactly how much data is needed to pick up an effect of a given size.  This is an important practical tool for data collection and analysis.  Perhaps more importantly, it gives a clear measurement of the efficacy of our algorithm in real world terms, allowing for easy comparison between this work and future attempts to improve the power of differentially private ANOVA testing.

This project is only a small step into what is a very large area for exploration.  There are other techniques for differentially private algorithms that might yield more powerful results.  We made some attempts to use the propose-test-release framework of Dwork and Lei \cite{dwork2009differential} or the smooth sensitivity framework of Nissim et al.~\cite{nissim2007smooth}.  These attempts gave us less effective algorithms than the one presented here, but it is certainly possible that further work could find a better algorithm.  

We hope that presenting algorithms for differentially private equivalents of frequently used statistical tests will make it easier for practitioners in other fields to make use of differentially private analysis and allow the useful study of data that was otherwise inaccessible to researchers.

\subsection*{Acknowledgment}

We thank Ira Globus-Harris for helpful comments on the presentation of our proofs.

\bibliographystyle{IEEEtran}
\bibliography{ANOVA}
%

\end{document}